\newtheorem{prop}{Proposition}
\newtheorem{lem}{Lemma}
\newtheorem{thm}{Theorem}
\newtheorem{ilem}[lem]{Lemma}
\newcommand{\lmbb}{\ensuremath{L\mathfrak{B}_2{}}}
\newcommand{\set}[1]{\ensuremath{\left\{#1\right\}}}
\newcommand{\seq}[1]{\ensuremath{\left<#1\right>}}
\newcommand{\mbb}{\ensuremath{\mathfrak{B}_2{}}}
\newcommand{\sta}[3]{\ensuremath{{#1}_{#2},\dots,{#1}_{#3}}}
\newcommand{\stb}[2]{\ensuremath{{#1},\dots,{#2}}}
\newcommand{\stc}[4]{\ensuremath{{#1}_{#2}#4,\dots,{#1}_{#3}#4}}
\newcommand{\ccaa}[2]{{{\displaystyle #1}\atop{\displaystyle #2}}}
\title{On sufficient conditions for expressibility of constants in the 4-valued extension of the propositional provability logic $GL$}
\author{Andrei RUSU\\
Information Society Development Institute \\
Academy of Sciences of Moldova\\
andrei.rusu@idsi.md}
\date{\ }
\begin{document}

\maketitle{}

\begin{center}
\emph{In memory of Professor Mefodie Rață}
\end{center}
\begin{abstract}
In the present paper we consider the simplest non-classical extension $GL4$ of the well-known propositional provability logic $GL$ together with the notion of expressibility of formulas in a logic proposed by A.~V.~Kuznetsov. Conditions for expressibility of constants in $GL4$ are found out, which were first announced in a author's paper in 1996.  
\end{abstract}

\renewcommand{\baselinestretch}{1.0}\normalsize%
\section{Introduction}
\renewcommand{\baselinestretch}{1.33}\normalsize%
The criteria of completeness with respect to expressibility is well-known in the case of boolean functions \cite{Post1921, Post1941}. A.~V.~Kuznetsov \cite{Kuznetsov1965, Kuznetsov1971} has specified the notion of expressibility to the case of formulas in logical calculi, using the rule of replacement by its equivalent in the given logic. Professor Mefodie Rață has obtained the criterion of completeness relativ to expressibility in propositional intuitionistic logic and its extensions \cite{Ratsa1971, Ratsa1982}.  

We consider the simplest non-classical 4-valued extension of the propositional provability logic of G\"odel-L\"ob $GL$ \cite{Solovay1975} and found out the sufficient conditions for expressibility of constant formulas of this logic. 

\renewcommand{\baselinestretch}{1.0}\normalsize%
\section{Definitions and notations}
\renewcommand{\baselinestretch}{1.33}\normalsize%

{\bf Propositional provability logic $GL$.}
The formulas of the propositional provability calculus of $GL$ are built from the symbols of propositional variables $p, q, r, \dots$ (may be also indexed), by means of the symbols of logical connectives $\&, \vee, \supset, \neg$ and $\Delta$ (represent the unary modal operation of provability by G\"odel), and parentheses. For example, the expressions $(p\&\neg p)$, $(p\supset p)$, $(\Delta(p\&\neg p))$ and $(\neg(\Delta(p\&\neg p)))$ are formulas in the calculus of $GL$, representing the constant formulas denoted in the following by $0, 1, \sigma, \rho$, and we denote the formulas $(p\&\Delta p)$ and $((p\supset q)\&(q\supset p))$ as $\square p$ (box $p$) and $(p\sim q)$ (equivalence of $p$ and $q$). External parentheses are usually omitted. The calculus of the $GL$ is determined by the axioms of the classical calculus of propositions, three $\Delta$-axioms 
\begin{equation}\nonumber
  \Delta(p\supset q)\supset(\Delta p\supset\Delta q), \ 
  \Delta(\Delta p\supset p)\supset\Delta p, \ 
  \Delta p\supset\Delta\Delta p
\end{equation}
and the next three rules of inference: 1) the rule of substitution, 2) the modus ponens rule, and 3) the rule of necessitation which allows to pass from formula $A$ to formula $\Delta A$. 

In the present paper we consider the extension of $GL$, denoted by $GL4$, which can be obtained from $GL$ considering an additional axiom:
\begin{equation}\nonumber
\Delta\Delta 0 \& (\Delta(\Delta p \supset q) \vee (\Delta(\Delta q \supset p)).
\end{equation}

{\bf Magari's algebras.}  
A Magari's algebra \cite{Magari1975a} (also referred to as diagonalizable algebra) $\mathfrak D$ is a boolean algebra $\mathfrak B = (B; \with, \vee, \supset, \neg, \mathbb{0}, \mathbb{1})$ with an additional operator $\Delta$ satisfying the following identities: 
\begin{align}
\Delta(x \supset y)\supset (\Delta x\supset\Delta y)&=\mathbb{1}, \nonumber\\ 
\Delta x \supset \Delta\Delta x &= \mathbb{1},  \nonumber\\ 
\Delta(\Delta x\supset x) &= \Delta x,  \nonumber\\ 
\Delta \mathbb{1} &= \mathbb{1},  \nonumber
\end{align}
where $\mathbb{1}$ is the unit of $\mathfrak B$.  

Interpreting logical connectives of a formula $F$ by corresponding operations on a Magari's algebra $\mathfrak{D}$ we can evaluate any formula of $GL$ on any algebra $\mathfrak{D}$. If for any evaluation of  variables  of $F$ by elements of $\mathfrak{D}$ the resulting value of the formula $F$ on $\mathfrak{D}$ is $\mathbb{1}$ they say $F$ \emph{is valid on $\mathfrak{D}$}. The set of all valid formulas on the given Magari's algebra $\mathfrak{D}$ is an extension of $GL$ \cite{Maximova1989}.    

We consider the 4 valued Magari's algebra $\mathfrak{B}_2 = (\{\mathbb{0},\rho,\sigma,\mathbb{1}\}; \with, \vee, \supset, \neg, \Delta)$, its boolean operations $\with, \vee, \supset, \neg$ are defined as usual, and the operation $\Delta$ is defined as: 
\begin{equation}\nonumber
  \Delta \mathbb{0} = \Delta \rho = \sigma, \  
  \Delta \sigma = \Delta \mathbb{1} = \mathbb{1}.
\end{equation}

{\bf Expressibility of formulas \cite{Kuznetsov1979}.} Suppose in the logic $L$ we can define the equivalence of two formulas. The formula $F$ is said to be (explicitly) expressible via a system of formulas $\Sigma$ in the logic $L$ if $F$ can be obtained from variables and formulas of $\Sigma$ using two rules: a) the rule of weak substitution, which allows to  pass from two formulas, say $A$ and $B$ to the result of substitution of one of them in another in place of any variable $p$ of the formula $\frac{A, B}{A[p/B]}$ (where we denote by $A[p/B]$ the thought substitution); b) if we already get formulas $A$ and we know $A$ is equivalent in $L$ to $B$, then we have also formula $B$. 

{\bf Relations on algebras.} 
They say the formula $F(p_1,\dots,p_n)$ preserves on the Magari's algebra $\mathfrak{D}$ the relation $R(x_1,\dots,x_m)$ if for any elements $\alpha_{11}, \dots, \alpha_{mn}$ of $\mathfrak{D}$ the relations \begin{equation}\nonumber
  R(\alpha_{11},\dots,\alpha_{m1}),\dots,(\alpha_{1n},\dots,\alpha_{mn})
\end{equation}
implies 
\begin{equation}\nonumber
  R(F(\alpha_{11},\dots,\alpha_{1n}),\dots,F(\alpha_{m1},\dots,\alpha_{mn}))
\end{equation}

The relation $R(x_1,\dots,x_m)$ on a finite algebra $\mathfrak{D}$ can be substituted by  a corresponding matrix $\beta_{ik}$ $(i=1,\dots,m,\ k=1,\dots,l)$ of all elements of $\mathfrak{D}$ such that the statement $R(\beta_{1k},\dots,\beta_{mk})$ holds. In this case we speak about preserving of a matrix instead of preserving of a relation on $\mathfrak{D}$.

\section{Preliminary results}
{\bf Representatin of 4-valued operations by formulas.} 
Next theorem gives necessary and sufficient conditions for a 4-valued operation on the set $\{ \mathbb{0},\rho,\sigma,\mathbb{1} \}$ to be expressible via a formula of the propositional provability calculus. 

\begin{thm}\label{th-1-1}
A function $f$ of the general 4-valued logic can be expressed by a formula of the calculus of the logic $L\mathfrak{B}_2$ if and only if it conserves the relation $\Delta x = \Delta y$ on the algebra $\mathfrak{B}_2$.
\end{thm}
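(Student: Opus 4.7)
The plan is to prove both directions: necessity by a routine structural induction, and sufficiency by a direct construction that exploits the product decomposition $\mathfrak{B}_2 \cong \{0,1\}\times\{0,1\}$ of the Boolean reduct.

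\textbf{Necessity.} The relation $\sim$ given by $\Delta x = \Delta y$ splits $\mathfrak{B}_2$ into $E_0=\{\mathbb{0},\rho\}$ and $E_1=\{\sigma,\mathbb{1}\}$. I would check on a small table that $\sim$ is a congruence for each Boolean connective $\&,\vee,\supset,\neg$ (these descend coordinate-wise under the product structure) and that $\Delta$ respects $\sim$ by its very definition. Constants and projections preserve $\sim$ trivially, so a direct induction on formula structure yields the forward direction.

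\textbf{Sufficiency.} Identify $\mathfrak{B}_2$ with $\{0,1\}\times\{0,1\}$ via $x\mapsto(\epsilon(x),\pi(x))$, where $\epsilon(\mathbb{0})=\epsilon(\rho)=0$, $\epsilon(\sigma)=\epsilon(\mathbb{1})=1$ and $\pi(\mathbb{0})=\pi(\sigma)=0$, $\pi(\rho)=\pi(\mathbb{1})=1$; both $\epsilon$ and $\pi$ are Boolean homomorphisms $\mathfrak{B}_2\to\{0,1\}$, and in these coordinates $\Delta(\epsilon,\pi)=(1,\epsilon)$. Given an $n$-ary $f$ preserving $\sim$, the hypothesis is exactly that $\epsilon\circ f$ factors as $g_\epsilon(\epsilon(p_1),\ldots,\epsilon(p_n))$ for a Boolean $g_\epsilon$, while $\pi\circ f$ is an arbitrary Boolean function $g_\pi$ of the $2n$ bits $(\epsilon(p_i),\pi(p_i))$. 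I would realize $f$ as $t_1\vee t_2$ where $t_1$ evaluates to $(g_\epsilon,0)$ and $t_2$ to $(0,g_\pi)$ on every $\vec p$, so that $t_1\vee t_2=(g_\epsilon,g_\pi)=f(\vec p)$ coordinate-wise.

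\textbf{Constructing the two terms.} For $t_1$, take a classical propositional formula $E_1(q_1,\ldots,q_n)$ computing $g_\epsilon$, substitute $p_i$ for $q_i$, and conjoin with the constant $\sigma=\Delta(p\,\&\,\neg p)$, whose coordinates are $(1,0)$: this zeroes the $\pi$-coordinate while preserving the $\epsilon$-coordinate $g_\epsilon(\epsilon(\vec p))$. For $t_2$, take a classical propositional formula $E_2$ of $2n$ variables computing $g_\pi$, substitute $p_i$ for the first $n$ and $\Delta p_i$ for the last $n$; since $\pi(p_i)=\pi(p_i)$ and $\pi(\Delta p_i)=\epsilon(p_i)$, the $\pi$-coordinate of the result is $g_\pi(\epsilon(\vec p),\pi(\vec p))$, and conjoining with $\rho=\neg\Delta(p\,\&\,\neg p)$, whose coordinates are $(0,1)$, zeroes the $\epsilon$-coordinate. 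The formula $t_1\vee t_2$ then belongs to $L\mathfrak{B}_2$ and realizes $f$.

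\textbf{Main obstacle.} The substantive step I expect to be the hardest to discover — rather than verify — is the product decomposition of $\mathfrak{B}_2$ together with the observation that $\Delta$ couples the two factors by copying $\epsilon$ into the $\pi$-slot and setting $\epsilon$ to $1$. This coupling is precisely what makes $g_\pi$ realizable despite its involving the $\epsilon$-bits, which would otherwise be invisible to coordinate-wise Boolean operations. Once this is identified the construction is short and the verification of $t_1\vee t_2=f$ reduces to two trivial equalities coordinate-by-coordinate.
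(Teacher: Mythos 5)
Your proof is correct, but the sufficiency argument takes a genuinely different route from the paper. The paper proceeds by a brute-force interpolation in the style of a perfect disjunctive normal form: for every tuple $\gamma$ of elements of $\mathfrak{B}_2$ it forms a characteristic formula $C^\gamma(p_1,\dots,p_n)=(\with_{i}\square(p_i\sim\gamma_i))\with f(\gamma)$ and takes $F=\bigvee_{\gamma}C^\gamma$, then checks the three-case value table of $C^\gamma$ to see that all wrong disjuncts collapse to $0$ or to $\square\delta\with\sigma$ and the right one survives. You instead exploit the structural fact that the Boolean reduct of $\mathfrak{B}_2$ is $\{0,1\}\times\{0,1\}$ with $\rho=\neg\sigma$ the two atoms, that $\Delta(\epsilon,\pi)=(1,\epsilon)$ in these coordinates, and that preservation of $\Delta x=\Delta y$ says exactly that the $\epsilon$-coordinate of $f$ factors through the $\epsilon$-coordinates of the inputs; sufficiency then reduces to classical functional completeness, with $\sigma=\Delta(p\with\neg p)$ and $\rho=\neg\Delta(p\with\neg p)$ used as coordinate masks in $t_1\vee t_2$. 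Your route is more structural and economical (formulas of size polynomial in a classical representation rather than a disjunction over all $4^n$ tuples), and it makes transparent why the single relation $\Delta x=\Delta y$ is the exact obstruction; the paper's construction is more elementary and transfers to other finite Magari algebras where no such product decomposition with a coordinate-copying $\Delta$ is available. Only bookkeeping remains in your write-up: fix the argument order of $E_2$ so that the slots fed by $p_i$ and $\Delta p_i$ match the intended arguments $\pi(p_i)$ and $\epsilon(p_i)$ of $g_\pi$, and note that the constants needed in $E_1$, $E_2$ are themselves formulas ($p\with\neg p$, $p\supset p$). The necessity direction is the same routine induction in both treatments.
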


\begin{proof}
\textit{Necessity.} It can be easily verified the formulas  $p\& q$, $p\vee q$, $p\supset q$, $\neg p$ \c{s}i $\Delta p$ conserve the relation $\Delta x = \Delta y$ on the algebra $\mathfrak{B}_2$. Since any formula $F$ is directly expressible by them, and, so, the formula $F$ must also preserve the same relation on $\mathfrak{B}_2$. 

\textit{Sufficiency.} Let us to note that to any element of the algebra $\mathfrak{B}_2$ corresponds a constant of the logic $L\mathfrak{B}_2$, so, in the sequel we denote the elements of the algebra $\mathfrak{B}_2$ and the constants of the logic $L\mathfrak{B}_2$ by the same symbols. Suppose the operation $f(p_1,\dots,p_n)$ conserves the relation $\Delta x = \Delta y$ on the algebra $\mathfrak{B}_2$. We will show in the following how to design the formula  $F(p_1,\dots,p_n$), which represent the operation $f$ on the algebra $\mathfrak{B}_2$. 

Examine an arbitrary fixed set $\alpha=(\alpha_1,\dots,\alpha_n)$ of elements of $\mathfrak{B}_2$. Let $f(\alpha_1,\dots,\alpha_n)=\delta$ and consider the formula  $(\&_{i=1}^{n}\square(p_i\sim\alpha_i))\&\delta$ denoted by $C^\alpha(p_1,\dots,p_n)$. It can be verified that $C^\alpha$ satisfies the following conditions: 
\[
  C^\alpha(p_1,\dots,p_n)=
  \begin{cases}
    \delta, & \text{if $p_i=\alpha_i$, $i=1,\dots,n$}\\
    \square 0\&\sigma, & \text{if $\forall i: \Delta p_i
    =\Delta\alpha_i$, \c{s}i $\exists i: p_i \not=\alpha_i, $}\\
    0,      & \text{if $\exists j: \Delta p_j
    \not=\Delta\alpha_j$}
  \end{cases}.
\]
Denote with $\Gamma$ the set of all ordered sets of 4 elements from the set $\{  \mathbb{0},\rho,\sigma, \mathbb{1} \}$. Consider the formula 
\begin{equation}\label{eq-1-7}
  F(p_1,\dots,p_n)=\bigvee_{\gamma\in\Gamma} C^\gamma(p_1,\dots,p_n)
\end{equation}
Let us show the formula $F$ is the thought for one. To prove this it is sufficient  to convince ourselves that $F[\alpha_1,\dots,\alpha_n]$ $=$ $f(\alpha_1$, $\dots$, $\alpha_n)$ since the set of elements $\alpha$ is taken arbitrarily. The relation (\ref{eq-1-7}) can be rewritten as:  
\begin{equation}\label{eq:1-7a}
	\begin{split}
  F(p_1,\dots,p_n) = \bigvee_{\gamma\in\Gamma, \gamma=\alpha}
  &C^\gamma(p_1,\dots,p_n)\vee \\
  \bigvee_{\gamma\in\Gamma, \exists i: \Delta\gamma_i\not=\Delta\alpha_i}
  &C^\gamma(p_1,\dots,p_n)\vee \\
  \bigvee_{\alpha\not=\gamma, \Delta\gamma_i=\Delta\alpha_i}
  &C^\gamma(p_1,\dots,p_n).
  \end{split}
\end{equation}
The last relation (\ref{eq:1-7a}) implies, taking into consideration the properties of the formula $C^\alpha$, the following equality: 
\[
 \begin{aligned}
  F[\alpha_1,\dots,\alpha_n]=
    & C^\alpha(\alpha_1,\dots,\alpha_n) \vee \\
    & \bigvee_{\gamma\in\Gamma, \exists i: \Delta\gamma_i\not=\Delta\alpha_i}
      C^\gamma(\alpha_1,\dots,\alpha_n)
      \vee \\
    & \bigvee_{\alpha\not=\gamma, \Delta\gamma_i=\Delta\alpha_i}
      C^\gamma(\alpha_1,\dots,\alpha_n)= \\
    & \delta \vee 0 \vee (\square\delta\&\sigma)=
      \delta =
      f(\alpha_1,\dots,\alpha_n).
 \end{aligned}
\]
Hence, for an arbitrary set of elements $\alpha\in\Gamma$ we have  
\[
  F[\alpha_1,\dots,\alpha_n]=f(\alpha_1,\dots,\alpha_n).
\]
So, the formula $F$ realizes the operation $f$ on the algebra $\mathfrak{B}_2$.

The theorem \ref{th-1-1} is proved. 
\end{proof}

The next statement is a consequence of the above theorem. 

\begin{prop}
There are 64 unary formulas in the calculus of the logic $L\mathfrak{B}_2$ which are not equivalent each other in $L\mathfrak{B}_2$ and realize the corresponding unary operations of the algebra $\mathfrak{B}_2$. 
\end{prop}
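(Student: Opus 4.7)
The plan is to read the claim through the lens of Theorem~\ref{th-1-1}: two unary formulas of $L\mathfrak{B}_2$ are equivalent in $L\mathfrak{B}_2$ iff they induce the same function on $\mathfrak{B}_2$ (one direction is soundness of the evaluation on $\mathfrak{B}_2$; the other direction, $F \sim G$ getting value $\mathbb{1}$ under every assignment, says literally that $F$ and $G$ agree as functions on $\mathfrak{B}_2$). Hence the number of pairwise non-equivalent unary formulas of $L\mathfrak{B}_2$ equals the number of unary operations on $\{\mathbb{0},\rho,\sigma,\mathbb{1}\}$ that are expressible by a formula, and by Theorem~\ref{th-1-1} these are exactly the unary operations preserving the relation $\Delta x = \Delta y$ on $\mathfrak{B}_2$.

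Next I would reduce the counting to a combinatorial exercise. Using the definition of $\Delta$ on $\mathfrak{B}_2$, the relation $\Delta x = \Delta y$ partitions $\{\mathbb{0},\rho,\sigma,\mathbb{1}\}$ into the two blocks $\{\mathbb{0},\rho\}$ (where $\Delta$ takes the value $\sigma$) and $\{\sigma,\mathbb{1}\}$ (where $\Delta$ takes the value $\mathbb{1}$). A unary operation $f$ preserves this relation precisely when $\Delta f(\mathbb{0}) = \Delta f(\rho)$ and $\Delta f(\sigma) = \Delta f(\mathbb{1})$, i.e.\ when the ordered pair $(f(\mathbb{0}), f(\rho))$ lies in a single block and the ordered pair $(f(\sigma), f(\mathbb{1}))$ lies in a single block.

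For each such pair there are $2$ choices of target block and $2\times 2 = 4$ ordered pairs inside each block, giving $8$ admissible pairs; the two pairs being chosen independently, we obtain $8 \times 8 = 64$ admissible unary operations. This produces the desired count.

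The only genuinely non-routine step is the equivalence-iff-same-function claim invoked at the start; everything else is a direct appeal to Theorem~\ref{th-1-1} plus elementary counting. I expect no further obstacles, and the argument can be concluded by remarking that Theorem~\ref{th-1-1} simultaneously guarantees the existence of a concrete realizing formula (via the construction $F = \bigvee_{\gamma \in \Gamma} C^{\gamma}$) for each of the $64$ admissible operations.
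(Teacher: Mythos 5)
Your argument is correct and is essentially the paper's own: the paper states the proposition as a direct consequence of Theorem~\ref{th-1-1} and records the $64$ admissible unary operations in Table~\ref{tabel:1-1} as $I_{ij}$ ($i,j=1,\dots,8$), which is exactly your $8\times 8$ block-preservation count, with equivalence in $L\mathfrak{B}_2$ amounting to equality of the induced functions on $\mathfrak{B}_2$ just as you observe. No gaps; your explicit counting simply spells out what the table encodes.
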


\begin{table}[h]
    \caption{Unary operations of  $\mathfrak{B}_2$}\label{tabel:1-1}
    \centering
    \begin{tabular}{|c|c|c|c|c|c|c|c|c|}
        \hline
        $p$ & $I_{1j}$ & $I_{2j}$ & $I_{3j}$ & $I_{4j}$ & $I_{5j}$ & $I_{6j}$ & $I_{7j}$ & $I_{8j}$ \\
        \hline
        $0$         & $0$ & $0$ & $\rho$   & $\rho$ & $\sigma$ & $\sigma$ & $1$ & $1$ \\
        $\rho$      & $0$ & $ \rho$ & $0$ & $\rho$ & $\sigma$ & $1$ & $\sigma$ & $1$ \\
        \hline
        $p$ & $I_{i1}$ & $I_{i2}$ & $I_{i3}$ & $I_{i4}$ & $I_{i5}$ & $I_{i6}$ & $I_{i7}$ & $I_{i8}$ \\
        \hline
        $\sigma$    & $0$ & $0$ & $\rho$   & $\rho$ & $\sigma$ & $\sigma$ & $1$ & $1$ \\
        $1$         & $0$ & $ \rho$ & $0$ & $\rho$ & $\sigma$ & $1$ & $\sigma$ & $1$ \\
        \hline
    \end{tabular}
\end{table}

In order to describe the derived unary operations of the algebra $\mathfrak{B}_2$ we use  the table \ref{tabel:1-1}, where $I_{ij}(p)$ $(i=1,\dots,8; \ j=1,\dots,8)$ denotes the unary operation which for $p=0$ and $p=\rho$ takes values from the $i$-th column, and for $p=\sigma$ and $p=1$ it takes values from the $j$-th column.

For example, $I_{11}=0$, $I_{16}=p$, $I_{73}=\neg p$, $I_{58}=\Delta p$, $I_{88}=1$.


\renewcommand{\baselinestretch}{1.0}\normalsize%
\section{Main result}\label{sec:expr-const}
\renewcommand{\baselinestretch}{1.33}\normalsize%

Consider the following relations on $\mathfrak{B}_2$ (read symbols "$==$" as "defined by"): 

1) $R_1(x)==(\Delta x=\sigma)$;

2) $R_2(x)==(\Delta x=1)$;

3) $R_3(x)==I_{15}(x)=x)$;

4) $R_4(x)==I_{18}(x)=x)$;

5) $R_5(x)==I_{45}(x)=x)$;

6) $R_6(x)==I_{48}(x)=x)$;

7) $R_7(x)==I_{25}(x)=x)$;

8) $R_8(x)==I_{28}(x)=x)$;

9) $R_9(x)==I_{16}(x)=x)$;

10) $R_{10}(x)==I_{46}(x)=x)$;

11) $R_{11}(x,y)==(I_{37}(x)=y)$;

12) $R_{12}(x,y)==(\Delta x\not=\Delta y)$;

We denote by $\mathfrak{M}_i$ the corresponding matrix to the relation $R_i$ on the  algebra $\mathfrak{B}_2$ and denote with $\Pi_i$ the class of all formulas, which preserves the relation $R_i$ on the algebra $\mathfrak{B}_2$, i.e. the class of all formulas, which conserves the matrix $\mathfrak{M}_i$ on $\mathfrak{B}_2$ for any  $i=1,\dots,12$. 

The table  \ref{tabel:21} presents the list of all classes $\Pi_1,\dots,\Pi_{12}$ and their corresponding matrix. 

%
%
%
%


\begin{longtable}{|c|c|}
    \caption{The class of formulas and the corresponding matrix}\label{tabel:21} \\ 
    		\hline
        The class   & Defining matirx \\ 
    		\hline
        \endfirsthead 
        \multicolumn{2}{r}{\small\itshape The next part of the table \ref{tabel:21}} \\ 
    		\hline
        The class & Defining matrix \\ 
    		\hline
        \endhead 
    		\hline
        \multicolumn{2}{r}{\small\itshape The table \ref{tabel:21} continues on the next page} \\ 
        \endfoot 
    		\hline
        \endlastfoot
        $\Pi_1$ & $\left(0\rho\right)$ \\
        \hline
        $\Pi_2$ & $\left(\sigma 1\right)$ \\
        \hline
        $\Pi_3$ & $\left(0 \sigma\right)$ \\
        \hline
        $\Pi_4$ & $\left(0 1\right)$ \\
        \hline
        $\Pi_5$ & $\left(\rho\sigma\right)$ \\
        \hline
        $\Pi_6$ & $\left(\rho 1\right)$ \\
        \hline
        $\Pi_7$ & $\left(0\rho\sigma\right)$ \\
        \hline
        $\Pi_8$ & $\left(0\rho 1\right)$ \\
        \hline
        $\Pi_9$ & $\left(0\sigma 1\right)$ \\
        \hline
        $\Pi_{10}$ & $\left(\rho\sigma 1\right)$ \\
        \hline
        $\Pi_{11}$ &
                    \rule[-3mm]{0mm}{9mm}
                    $\left(
											\ccaa{0}{\rho}
											\ccaa{\rho}{0}
											\ccaa{\sigma}{1}
											\ccaa{1}{\sigma}
                    \right)$ \\
        \hline
        $\Pi_{12}$ &
                    \rule[-3mm]{0mm}{9mm}
                    $\left(
											\ccaa{0}{\sigma}
											\ccaa{0}{1}
											\ccaa{\rho}{\sigma}
											\ccaa{\rho}{1}
											\ccaa{\sigma}{0}
											\ccaa{\sigma}{\rho}
											\ccaa{1}{0}
											\ccaa{1}{\rho}
                    \right)$ \\
        \hline
\end{longtable}

\begin{thm}\label{lema2-1}
Suppose the formulas $F_1, \dots, F_{12}$ do not preserve the corresponding relations $R_{1},\dots,R_{12}$ on the Magari's algebra $\mathfrak{B}_2$. The constants $\mathbb{0},\rho,\sigma,\mathbb{1}$ are expressible in the logic $L\mathfrak{B}_2$ via formulas $F_1, \dots, F_{12}$.
\end{thm}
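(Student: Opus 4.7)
The plan is to express each of the four constants $\mathbb{0}, \rho, \sigma, \mathbb{1}$ as variable-free formulas built from $F_1, \ldots, F_{12}$ using weak substitution and replacement of $L\mathfrak{B}_2$-equivalents. The key observation driving the argument is that among the twelve relations, $R_7, R_8, R_9, R_{10}$ each have a three-element matrix whose complement in $\{\mathbb{0}, \rho, \sigma, \mathbb{1}\}$ is a single element, so the hypothesis that $F_i$ fails to preserve $R_i$ for $i \in \{7,8,9,10\}$ means that $F_i$, applied to some tuple of values lying in its matrix, produces exactly the corresponding missing constant ($\mathbb{1}, \sigma, \rho, \mathbb{0}$, respectively). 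Hence, once we can supply $F_i$ with arguments constrained to its matrix, the resulting composition is a constant formula of the desired value.

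The bulk of the work is therefore to construct, for each three-element matrix, substituted formulas whose range lies in that matrix and which attain each value used by the witnessing tuple. For this I would use $F_1, \ldots, F_6$: each of these is the non-preserver of a two-element subset, so applied to any tuple in its matrix it outputs something in the complementary two-element subset, giving a starting mechanism for forcing values into specific pairs. Composing these under weak substitution yields formulas with increasingly restricted ranges. The binary witnesses $F_{11}$ and $F_{12}$ are needed to break the symmetries that would otherwise prevent distinguishing elements within a single $\Delta$-block: $F_{11}$ not preserving the involution $I_{37}$ (swapping $\mathbb{0} \leftrightarrow \rho$ and $\sigma \leftrightarrow \mathbb{1}$) yields a formula separating those pairs within the respective blocks, while $F_{12}$ not preserving $\Delta x \neq \Delta y$ yields a formula that mixes the two $\Delta$-blocks, letting values from one block feed the other.

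I expect the argument to be organised as a sequence of auxiliary lemmas: first, using $F_1, \ldots, F_6$ in concert with $F_{11}$ and $F_{12}$, show that for each two-element subset of $\mathfrak{B}_2$ there is a constructed formula whose range equals that subset; next, iterate to obtain for each three-element subset a formula whose range equals it; finally, substitute these range-controlled formulas into $F_7, F_8, F_9, F_{10}$ to extract $\mathbb{1}, \sigma, \rho, \mathbb{0}$ as constants. Each step will be a case analysis on the specific witnessing tuples supplied by the non-preservation hypothesis, together with a careful invocation of Theorem \ref{th-1-1} whenever an auxiliary function preserving $\Delta x = \Delta y$ must be replaced by an explicit $L\mathfrak{B}_2$-formula.

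The principal obstacle will be the combinatorial coordination of these substitutions. Since we only know the \emph{existence} of witnessing tuples for each $F_i$ and not their explicit form, the construction must be uniform enough to handle any admissible witness. The delicate point is guaranteeing that the range-restricted formulas realise \emph{each} value required by the tuple, not merely land somewhere outside the forbidden set; this is precisely where the specific contributions of $F_{11}$ (within-block separation) and $F_{12}$ (between-block mixing) become essential, and where I expect the argument to split into several subcases. Once those building blocks are in place, the final assembly of the four constant formulas should be a routine verification.
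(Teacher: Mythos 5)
Your high-level skeleton for the later stages (force values into pairs, then triples, then feed $F_7,\dots,F_{10}$ whose three-element matrices each miss exactly one element to produce the remaining constant) does match the final lemma of the paper's proof. But there is a genuine gap at the crucial bootstrapping step: you never explain how the \emph{first} constant formula arises. Substituting ``range-controlled'' but non-constant formulas into $F_7,\dots,F_{10}$ does not yield a constant: the result is again a formula in variables whose value at one particular evaluation equals the missing element, which is not the same as a formula equivalent in $L\mathfrak{B}_2$ to that constant. Moreover, ``a formula whose range equals a given two- or three-element subset'' is not the right invariant for the substitution you intend: to exploit a witnessing tuple $(\beta_1,\dots,\beta_n)$ of $F_7$, say, you need unary formulas $H_1(p),\dots,H_n(p)$ that attain the prescribed values $\beta_i$ \emph{simultaneously at one common argument}, and controlling ranges does not give you that. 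The paper resolves both difficulties by a mechanism your proposal omits: $F_1$ and $F_2$ are not just two more pair-non-preservers on a par with $F_3,\dots,F_6$ (and note that non-preservation gives only \emph{some} witnessing tuple, not ``any tuple in the matrix'' as you write); they supply block-crossing unary formulas $A(p)$ with $A[0]\in\{\sigma,1\}$ and $B(p)$ with $B[1]\in\{0,\rho\}$, and the first constant ($0$ or $\rho$) is then obtained by a range-collapsing composition --- e.g.\ if $B[\sigma]=B[1]$ and $B[0]\in\{0,\rho\}$, the formula $B[A[B(p)]]$ is outright equivalent to a constant --- together with a delicate case analysis (on $B[\sigma]=B[1]$ versus $B[\sigma]\neq B[1]$, and on the values $H[1]$, $S^*[\sigma]$, etc.) that uses $F_{12}$, $F_{11}$, $F_3$, $F_4$, $F_7$ in specific roles. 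That analysis is the heart of the paper's Lemmas \ref{lema2-1-3} and \ref{lema2-1-4}, and it is exactly the part your proposal defers to ``several subcases'' without providing the idea that makes the collapse to a constant happen.

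Once a single constant $0$ or $\rho$ is in hand, the paper's remaining step is close to what you describe, but again it works with genuine constants rather than range-controlled formulas: from the constant and $A$ one gets one of the two-constant systems $\{0,\sigma\},\{0,1\},\{\rho,\sigma\},\{\rho,1\}$; feeding these constants into the appropriate $F_3,\dots,F_6$ produces a third constant; feeding the resulting three constants into the appropriate $F_7,\dots,F_{10}$ produces the fourth. So your plan would become correct if you replaced ``formulas whose range equals the subset'' by ``constant formulas for each element of the subset'' and, before that, supplied the missing argument producing the first constant from $F_1$, $F_2$, $F_{11}$, $F_{12}$ (plus $F_3$, $F_4$, $F_7$); as it stands, that indispensable step is absent.
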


The proof of the theorem follows from the next 5 lemmas. 
 
\begin{ilem}\label{lema2-1-1}
The formula $A(p)$, where 
  \begin{equation}\label{eq:2-2}
    A[0]\in\{\sigma,1\}
  \end{equation}
  is expressible $L\mathfrak{B}_2$ via formula $F_1$.
\end{ilem}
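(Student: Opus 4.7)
The plan is to exploit the failure of $F_1$ to preserve $R_1$ on $\mathfrak{B}_2$ by extracting an explicit witness tuple of inputs, and then substituting into $F_1$ carefully chosen one-variable formulas that attain these witness values at $p=\mathbb{0}$. Since $R_1(x)\equiv(\Delta x=\sigma)$ holds exactly on $\{\mathbb{0},\rho\}$ (so the defining matrix is $\mathfrak{M}_1=(\mathbb{0},\rho)$), the hypothesis that an $n$-ary $F_1$ does not preserve $R_1$ yields a tuple $(\alpha_1,\dots,\alpha_n)$ with every $\alpha_i\in\{\mathbb{0},\rho\}$ while $\beta:=F_1(\alpha_1,\dots,\alpha_n)$ satisfies $\Delta\beta\neq\sigma$, forcing $\beta\in\{\sigma,\mathbb{1}\}$.

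Next, for each index $i$ I pick a formula $G_i(p)$ in the single variable $p$ with $G_i[\mathbb{0}]=\alpha_i$: namely $G_i(p):=p$ when $\alpha_i=\mathbb{0}$, and $G_i(p):=\neg\Delta p$ when $\alpha_i=\rho$ (using $\Delta\mathbb{0}=\sigma$ and $\neg\sigma=\rho$ in $\mathfrak{B}_2$). I then set $A(p):=F_1(G_1(p),\dots,G_n(p))$, a formula obtained by iterated weak substitution from $F_1$ together with the basic connectives of the underlying calculus. A direct evaluation at $p=\mathbb{0}$ gives $A[\mathbb{0}]=F_1(\alpha_1,\dots,\alpha_n)=\beta\in\{\sigma,\mathbb{1}\}$, which is exactly the required condition (\ref{eq:2-2}).

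The construction is essentially immediate and there is no serious algebraic obstacle; the only point that needs attention is the bookkeeping verification that both the individual $G_i$'s and the outer substitution $F_1(G_1(p),\dots,G_n(p))$ are legitimately built by the weak substitution rule from $F_1$ and from the connectives $\neg,\Delta$ of $L\mathfrak{B}_2$. Since $n$ and the tuple $(\alpha_1,\dots,\alpha_n)$ depend on $F_1$, this is the only case analysis involved, and it is a finite check.
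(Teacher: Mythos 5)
There is a genuine gap, and it lies exactly in the step you flag as mere ``bookkeeping.'' The lemma asserts that $A(p)$ is expressible via the system $\Sigma=\{F_1\}$, and by the Kuznetsov definition used in this paper a formula expressible via $\Sigma$ must be built from \emph{variables and formulas of $\Sigma$ only}, using weak substitution and replacement by $L\mathfrak{B}_2$-equivalents. The connectives $\neg$ and $\Delta$ are not granted for free: if they were, the whole theorem would be trivial, since the constants themselves are written with the basic connectives. Your choice $G_i(p):=\neg\Delta p$ for the coordinates with $\alpha_i=\rho$ therefore uses material outside the allowed system, and there is no reason why $\neg\Delta p$ (or any formula taking the value $\rho$ at $\mathbb{0}$) should be expressible via $F_1$ alone. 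So $A(p):=F_1(G_1(p),\dots,G_n(p))$ is not a legitimate witness for the lemma as stated.

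The paper's proof avoids this by invoking the necessity half of Theorem~\ref{th-1-1}: every formula of the calculus preserves the relation $\Delta x=\Delta y$ on $\mathfrak{B}_2$. Since $\Delta\mathbb{0}=\Delta\rho=\sigma$, replacing each $\alpha_i\in\{\mathbb{0},\rho\}$ of your witness tuple by $\mathbb{0}$ does not change the $\Delta$-value of the output, hence $\Delta F_1[\mathbb{0},\dots,\mathbb{0}]=\Delta F_1[\alpha_1,\dots,\alpha_n]=\mathbb{1}$ and so $F_1[\mathbb{0},\dots,\mathbb{0}]\in\{\sigma,\mathbb{1}\}$. One then simply takes $A(p)=F_1[p_1/p,\dots,p_n/p]$, i.e.\ identifies all variables with the single variable $p$ — a construction using only $F_1$ and a variable, hence legitimately expressible via $F_1$. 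Your extraction of the witness tuple from the failure of $R_1$-preservation is the same as the paper's first step; the missing idea is this invariance argument, which lets you dispense with the auxiliary formulas $G_i$ altogether.
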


\begin{proof}
Really, the formula $F_1$ does not conserve the relation $R_1$ on the algebra $\mathfrak{B}_2$. Then there exists an ordered set of elements  $\seq{\alpha_1,\dots,\alpha_n}$ from $\mathfrak{B}_2$ such that  
\begin{eqnarray}
    &\alpha_i\in\set{0,\rho} \quad (i=1,\dots,n)\label{eq:2-3} \\
    &F_1[\alpha_1,\dots,\alpha_n]\in\set{\sigma,1}\label{eq:2-4}
\end{eqnarray}
Since $F_1$ conserves the predicate $\Delta x = \Delta y$ on the algebra 
$\mbb$, in view of relations  (\ref{eq:2-3}) \c{s}i (\ref{eq:2-4}) we also have that  
\begin{equation}
    F_1[0,\dots,0]\in\set{\sigma,1}\label{eq:2-5}
\end{equation}
Let $A(p)=F_1[p_1/p,\dots,p_n/p]$. In virtue of (\ref{eq:2-5}) we obtain $A[0]\in\set{\sigma,1}$. 
\end{proof}

\begin{ilem}\label{lema2-1-2}
The formula $B(p)$, where  
  \begin{equation}\label{eq:2-6}
    B[1]\in\{0,\rho\}
  \end{equation}
  is expressible in $L\mathfrak{B}_2$ via $F_2$.
\end{ilem}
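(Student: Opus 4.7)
The plan is to mimic the proof of Lemma \ref{lema2-1-1} dualized from the matrix $\{0,\rho\}$ to the matrix $\{\sigma,\mathbb{1}\}$ associated with $R_2$. So I would proceed in three short steps.

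First, I would unpack the failure of preservation: since $F_2(p_1,\dots,p_n)$ does not preserve $R_2$ on $\mathfrak{B}_2$, there exists an ordered tuple $\langle\alpha_1,\dots,\alpha_n\rangle$ of elements of $\mathfrak{B}_2$ with $\alpha_i\in\{\sigma,\mathbb{1}\}$ for all $i$ (so $\Delta\alpha_i=\mathbb{1}$) but $F_2[\alpha_1,\dots,\alpha_n]\notin\{\sigma,\mathbb{1}\}$, i.e.\ $F_2[\alpha_1,\dots,\alpha_n]\in\{\mathbb{0},\rho\}$.

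Second, I would invoke Theorem \ref{th-1-1}: every formula of $L\mathfrak{B}_2$ preserves the relation $\Delta x=\Delta y$ on $\mathfrak{B}_2$, hence so does $F_2$. Since $\Delta\alpha_i=\mathbb{1}=\Delta\mathbb{1}$ for each $i$, applying this preservation componentwise between $\langle\alpha_1,\dots,\alpha_n\rangle$ and $\langle\mathbb{1},\dots,\mathbb{1}\rangle$ yields
\[
\Delta F_2[\mathbb{1},\dots,\mathbb{1}]=\Delta F_2[\alpha_1,\dots,\alpha_n],
\]
and the right-hand side lies in $\{\Delta\mathbb{0},\Delta\rho\}=\{\sigma\}$, so $F_2[\mathbb{1},\dots,\mathbb{1}]\in\{\mathbb{0},\rho\}$.

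Third, I would define $B(p):=F_2[p_1/p,\dots,p_n/p]$, that is, simultaneously substitute the single variable $p$ for all $p_1,\dots,p_n$. This is legitimate by the rule of weak substitution (iterated), so $B$ is expressible via $F_2$ in $L\mathfrak{B}_2$. Evaluating at $p=\mathbb{1}$ gives $B[\mathbb{1}]=F_2[\mathbb{1},\dots,\mathbb{1}]\in\{\mathbb{0},\rho\}$, as required.

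There is no genuine obstacle: the argument is completely parallel to Lemma \ref{lema2-1-1}, and the only thing to be careful about is making sure Theorem \ref{th-1-1} is used correctly to transport the value from the witness tuple $\langle\alpha_1,\dots,\alpha_n\rangle$ to the all-$\mathbb{1}$ tuple inside the same $\Delta$-class $\{\sigma,\mathbb{1}\}$.
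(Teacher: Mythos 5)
Your proof is correct, but it takes a different route from the paper: the paper disposes of Lemma \ref{lema2-1-2} in one line, observing that its statement is dual to Lemma \ref{lema2-1-1} under $\neg p$ (negation swaps the sets $\set{\mathbb{0},\rho}$ and $\set{\sigma,\mathbb{1}}$, i.e.\ the matrices of $R_1$ and $R_2$), and simply transfers the argument with $B$ in place of $A$. You instead re-run the Lemma \ref{lema2-1-1} argument directly for $R_2$ and $F_2$: extract a witness tuple $\seq{\alpha_1,\dots,\alpha_n}$ with all $\alpha_i\in\set{\sigma,\mathbb{1}}$ and $F_2[\alpha_1,\dots,\alpha_n]\in\set{\mathbb{0},\rho}$, use the necessity half of Theorem \ref{th-1-1} (every formula preserves $\Delta x=\Delta y$) to slide the tuple to $\seq{\mathbb{1},\dots,\mathbb{1}}$ within the same $\Delta$-class, and identify all variables by weak substitution to get $B(p)$ with $B[\mathbb{1}]\in\set{\mathbb{0},\rho}$; each of these steps is sound, including the computation $\Delta F_2[\mathbb{1},\dots,\mathbb{1}]=\sigma$ forcing the value into $\set{\mathbb{0},\rho}$. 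The paper's duality appeal is shorter and makes the structural symmetry explicit, but it leaves to the reader the check that negation really intertwines $R_1$ with $R_2$ and that the construction transports; your direct version is self-contained and verifiable without formulating any duality principle, at the cost of repeating the mechanics of Lemma \ref{lema2-1-1}. Either proof is acceptable.
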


\begin{proof}
The validity of lemma \ref{lema2-1-2} follows from the fact that its formulation is dualistic to the formulation of lemma  \ref{lema2-1-1} with respect to $\neg p$, where formula $B$ is considered in place of the corresponding formula $A$. 
\end{proof}

\begin{ilem}\label{lema2-1-3}
Let the formulas $A$ and $B$ satisfy the relations (\ref{eq:2-2}), (\ref{eq:2-6}) and  
    \begin{equation}\label{eq:2-7}
        B[\sigma]=B[1].
    \end{equation}
Then at one of the constants  $0$ or $\rho$ is expressible via formulas $A, B$ and $F_{12}$ in the logic  \lmbb.
\end{ilem}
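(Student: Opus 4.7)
The plan is to produce a formula that identically equals $b := B[\sigma] = B[1]$; by hypothesis $B[1] \in \{0, \rho\}$, so $b$ is one of $0, \rho$ and expressing it as a constant suffices. The basic building block will be $M(p) := B(A(p))$. Because $A$ preserves the relation $\Delta x = \Delta y$ (Theorem \ref{th-1-1}) and $A[0] \in \{\sigma, 1\}$, also $A[\rho] \in \{\sigma, 1\}$; together with $B[\sigma] = B[1] = b$ this gives $M[0] = M[\rho] = b$.

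I would then split according to the behaviour of $A$ and $B$ on the opposite class. If $A$ sends $\{\sigma, 1\}$ into $\{\sigma, 1\}$, then $A(p) \in \{\sigma, 1\}$ for every $p$ and $M(p) \equiv b$; the formula $F_{12}$ is not needed. If $A$ swaps the two classes but $B$ sends $\{0, \rho\}$ into $\{0, \rho\}$, then $M(p) \in \{0, \rho\}$ for every $p$, whence $M(M(p)) \equiv b$; again $F_{12}$ is unused. The remaining essential case is when both $A$ and $B$ swap the two classes, and it is here that $F_{12}$ enters.

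In that essential case $M = B \circ A$ is class-preserving with $M(p) = b$ for every $p \in \{0, \rho\}$, while $A$ itself is class-swapping. I would invoke a witness for $F_{12}$ not preserving $R_{12}$: elements $\alpha_1, \ldots, \alpha_n, \beta_1, \ldots, \beta_n$ of $\mathfrak{B}_2$ with $\Delta \alpha_i \neq \Delta \beta_i$ for every $i$ and with $\Delta F_{12}[\alpha_1, \ldots, \alpha_n] = \Delta F_{12}[\beta_1, \ldots, \beta_n]$. I then form
\[
 G(q) := F_{12}\bigl[H_1(q), \ldots, H_n(q)\bigr],
\]
setting $H_i := M$ whenever $\alpha_i \in \{0, \rho\}$ and $H_i := A$ whenever $\alpha_i \in \{\sigma, 1\}$. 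A direct check using class-preservation of $M$ and class-swapping of $A$ shows that $\Delta H_i(q) = \Delta \alpha_i$ for every $i$ when $q \in \{0, \rho\}$, and $\Delta H_i(q) = \Delta \beta_i$ for every $i$ when $q \in \{\sigma, 1\}$. Since every formula preserves $\Delta x = \Delta y$, the value $G(q)$ lies, for every $q$, in the single class common to $F_{12}[\alpha_1, \ldots, \alpha_n]$ and $F_{12}[\beta_1, \ldots, \beta_n]$.

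To conclude, depending on whether that common class is $\{0, \rho\}$ or $\{\sigma, 1\}$, I would compose $G$ on the outside either with $M$, obtaining $M(G(q)) \equiv b$ from $M[0] = M[\rho] = b$, or with $B$, obtaining $B(G(q)) \equiv b$ from $B[\sigma] = B[1] = b$. In either event the resulting formula realizes the constant $b \in \{0, \rho\}$. The main obstacle is the essential case: one must notice that the simultaneous class-swapping of $A$ and $B$ provides at the same time a class-preserving unary formula ($M$) and a class-swapping one ($A$), which is exactly what is required so that the class pattern of $(H_1(q), \ldots, H_n(q))$ agrees with $(\alpha_1, \ldots, \alpha_n)$ on $q \in \{0, \rho\}$ and with $(\beta_1, \ldots, \beta_n)$ on $q \in \{\sigma, 1\}$, converting the failure of $F_{12}$ to preserve $R_{12}$ into a class-constant unary formula on which the "absorbing" identities for $M$ and $B$ can then act to collapse everything to $b$.
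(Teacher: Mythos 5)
Your argument is correct, and it rests on the same two pillars as the paper's proof: the witness tuples $\langle\alpha_1,\dots,\alpha_n\rangle$, $\langle\beta_1,\dots,\beta_n\rangle$ obtained from the failure of $F_{12}$ to preserve $R_{12}$, together with the fact that every formula preserves $\Delta x=\Delta y$, and the absorbing identities $B[\sigma]=B[1]$ (and your $M[0]=M[\rho]$) that collapse a unary formula whose values stay in one $\Delta$-class to a constant. Where you differ is in the decomposition and in how the witness is exploited. The paper splits only on $B$: if $B[0]\in\{0,\rho\}$ then $B[A[B(p)]]$ is already constant, while if $B$ swaps the two classes it renames the variables of $F_{12}$ into an eight-variable formula $D$, passes to $D^*(p,q)=D[p,p,p,p,q,q,q,q]$, feeds it the opposite-class pair $(p,B(p))$, and then applies $B$ twice. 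You split on both $A$ and $B$, settling two of the three cases without $F_{12}$ at all (note the paper still invokes $F_{12}$ when $B$ swaps but $A$ preserves $\{\sigma,1\}$, where your $B(A(p))$ suffices; conversely its case-1 formula $B[A[B(p)]]$ handles your case (ii) more directly than $M(M(p))$), and in the remaining case you substitute the class-preserving $M=B\circ A$ and the class-swapping $A$ straight into the argument places of $F_{12}$ according to the class of $\alpha_i$, producing the class-constant unary formula $G$ in one step. This is a somewhat leaner construction; what the paper's version buys is a case analysis driven by $B$ alone, with the opposite-class inputs supplied uniformly by the single pair $(p,B(p))$. Both arguments check out, including the final step that a formula taking the constant value $b\in\{0,\rho\}$ on $\mathfrak{B}_2$ is equivalent in the logic to that constant.
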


\begin{proof}
Let $B$ satisfies the relations (\ref{eq:2-6}) and (\ref{eq:2-7}). Then two cases are possible for the formula $B$: 1) $B[0]\in\set{0,\rho}$; 2) $B[0]\in\set{\sigma,1}$. Let us observe in the first case the formula $B[A[B(p)]]$ is equivalent to one of the constants $0$ or $\rho$. 

Consider case 2), i.e. $B[0]\in\set{\sigma,1}$. Consider formula $F_{12}$, which does not preserve $R_{12}$ on  \mbb. Then there are exist two ordered sets of elements  \seq{\alpha_1,\dots,\alpha_n} and  \seq{\beta_1,\dots,\beta_n} from the algebra $\mbb$ such that  
\begin{eqnarray}
    &\Delta\alpha_i\not=\Delta\beta_i\quad(i=1,\dots,n)\label{eq:2-8}&{} \\
    &\Delta F_{12}[\alpha_1,\dots,\alpha_n] =
    \Delta F_{12}[\beta_1,\dots,\beta_n]&{}\label{eq:2-9} 
\end{eqnarray}
We build the formula $D(p_1,\dots,p_8)=F_{12}[D_1,\dots,D_n]$, where for every $i=1,\dots,n$
\[
D_i(p_1,\dots,p_8) = p_1, \mbox{if } \alpha_i=0, \beta_i=\sigma,
\]\[
D_i(p_1,\dots,p_8) = p_2, \mbox{if } \alpha_i=0, \beta_i=1,
\]\[
D_i(p_1,\dots,p_8) = p_3,  \mbox{if } \alpha_i=\rho, \beta_i=\sigma,
\]\[
D_i(p_1,\dots,p_8) = p_4,  \mbox{if } \alpha_i=\rho, \beta_i=1,
\]\[
D_i(p_1,\dots,p_8) = p_5, \mbox{if } \alpha_i=\sigma, \beta_i=0,
\]\[
D_i(p_1,\dots,p_8) = p_6, \mbox{if } \alpha_i=\sigma, \beta_i=\rho,
\]\[
D_i(p_1,\dots,p_8) = p_7, \mbox{if } \alpha_i=1, \beta_i=0,
\]\[
D_i(p_1,\dots,p_8) = p_8, \mbox{if } \alpha_i=1, \beta_i=\rho
\]
(by the power of relation (\ref{eq:2-8}) other cases are impossible). It is clear that $D_i[0$, $0$, $\rho$, $\rho$, $\sigma$, $\sigma$, $1$, $1]=\alpha_i$ and $D_i[\sigma,1,\sigma,1,0,\rho,0,\rho]=\beta_i$. Then, taking into account the design of the formula $D$, the relation (\ref{eq:2-9}) and the last equalities, we obtain 
\begin{equation}\label{eq:2-10}
    \left(
        \begin{array}{c}
            D[0,0,\rho,\rho,\sigma,\sigma,1,1]\\
            D[\sigma,1,\sigma,1,0,\rho,0,\rho]
        \end{array}
    \right)
    \subseteq
    \left(
        \begin{array}{c}
            0 0 \rho \rho \sigma \sigma 1 1\\
            0 \rho 0 \rho \sigma 1 \sigma 1
        \end{array}
    \right)
\end{equation}
Consider now the formula $D^*(p,q)=D[p,p,p,p,q,q,q,q]$. By (\ref{eq:2-10}) and the fact that  $D$ conserves on the algebra  $\mbb$ the predicate $\Delta x = \Delta y$, we obtain
\begin{equation}\label{eq:2-11}
    \left(
        \begin{array}{c}
            D^*[0,1]\\
            D^*[1,0]
        \end{array}
    \right)
    \subseteq
    \left(
        \begin{array}{c}
            0 0 \rho \rho \sigma \sigma 1 1\\
            0 \rho 0 \rho \sigma 1 \sigma 1
        \end{array}
    \right)
\end{equation}
Let us examine the formula $D'(p,q)$, defined by the scheme 
\[
    D'(p,q)=
    \left\{
        \begin{array}{ll}
            D^*(p,q), & \mbox{if } D^*[0,1]\in\set{0,\rho},\\
            B[D^*(p,q)], & \mbox{if } D^*[0,1]\in\set{\sigma,1}.
        \end{array}
    \right.
\]
By power of the relation (\ref{eq:2-7}) and taking into consideration  (\ref{eq:2-11}), the formula $D'$ satisfies the inclusion $\set{D'[0,1],D'[1,0]}\subseteq\set{0,\rho}$. Therefore, in the second case, taking into consideration  (\ref{eq:2-6}), the relation $B[D'[p,B(p)]]]\in\set{\sigma,1}$ holds. Hence, on the basis of the conditions (\ref{eq:2-6}) and (\ref{eq:2-7}), the formula $B[B[D'[p$, $B(p)]]]$ is equivalent  to one of the constants  $0$ or $\rho$. 
\end{proof}

\begin{ilem}\label{lema2-1-4}
Let formulas $A$ and $B$ satisfy the relations (\ref{eq:2-2}), (\ref{eq:2-6}) and  
    \begin{equation}\label{eq:2-12}
        B[\sigma]\not=B[1].
    \end{equation}
Then at least one of the constants  $0$ or $\rho$ is expressible via formulas  $A$, $B$, $F_{3},F_{7},F_{11},F_{12}$ in the logic \lmbb.
\end{ilem}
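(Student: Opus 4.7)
The plan is to proceed by a careful case analysis on the values taken by $A$ and $B$, and then to use each of the non-preservation witnesses $F_3, F_7, F_{11}, F_{12}$ to progressively restrict the image of a composite formula until a constant is forced.

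First I would extract the basic numerical consequences of the hypotheses. By Theorem~\ref{th-1-1} every formula preserves the relation $\Delta x = \Delta y$ on $\mathfrak{B}_2$, so $\Delta B[\sigma] = \Delta B[1]$; combined with (\ref{eq:2-6}) and (\ref{eq:2-12}), this forces $\{B[\sigma], B[1]\} = \{0, \rho\}$ with the two values distinct, splitting into the sub-cases $(B[\sigma], B[1]) = (0, \rho)$ and $(B[\sigma], B[1]) = (\rho, 0)$. Likewise $\{A[0], A[\rho]\} \subseteq \{\sigma, 1\}$, giving a similar split for $A$. Thus the composite $B \circ A$ maps $\{0, \rho\}$ into $\{0, \rho\}$ non-trivially, and symmetric reasoning gives information about $A \circ B$ on $\{\sigma, 1\}$.

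Next, in the spirit of the construction used in Lemma~\ref{lema2-1-3}, for each of the non-preserved relations I would extract witness tuples and use them to manufacture a unary formula by substituting variables or already-built unary formulas in the positions dictated by the witness. The role of $F_{12}$ is (as in Lemma~\ref{lema2-1-3}) to bridge inputs sitting at different $\Delta$-levels to a pair at the same $\Delta$-level; the role of $F_{11}$ is to break the involutive symmetry $(0\,\rho)(\sigma\,1)$ on some appropriate substitution, producing a formula whose output on a "swapped" pair is not the swap of the original output; and $F_3, F_7$ break preservation of the sets $\{0, \sigma\}$ and $\{0, \rho, \sigma\}$ respectively, collapsing the remaining alternatives. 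Using $A$ and $B$ as shuttles between the $\Delta$-levels $\{0, \rho\}$ and $\{\sigma, 1\}$, these witnesses can be composed to yield a one-variable formula whose value on both $0$ and $\rho$ (say) lands in $\{0, \rho\}$ and coincides; applying $B$ or $A\circ B$ one more time, and combining with the constants produced by Lemma~\ref{lema2-1-3}-style techniques, should pin down a single element of $\{0, \rho\}$, i.e.\ one of the desired constants.

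The main obstacle will be the orchestration of the many sub-cases rather than any single difficult step. In particular, because $B[\sigma] \neq B[1]$ now, the easy collapse used in case~2) of Lemma~\ref{lema2-1-3} is unavailable, so one must carefully pair the witness of $F_{11}$ (which lives within a single $\Delta$-level but violates the involution) with the witness of $F_{12}$ (which crosses $\Delta$-levels) and with the correct branch of the sub-case on $(B[\sigma], B[1])$. I expect the final step to require $F_3$ or $F_7$ to eliminate the last spurious value, depending on which involution-asymmetry $F_{11}$ produced; enumerating these residual possibilities, and verifying that in each one the constructed composite falls in $\{0, \rho\}$ as a constant, will be the most delicate bookkeeping of the argument.
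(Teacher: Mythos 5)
Your proposal is a strategy outline rather than a proof: the decisive constructions are never carried out, and the hardest branch is not addressed at all. You correctly identify the preliminary reduction ($\Delta B[\sigma]=\Delta B[1]$ forces $\{B[\sigma],B[1]\}=\{0,\rho\}$ with two sub-cases) and you correctly read the matrices behind $R_3$, $R_7$, $R_{11}$, $R_{12}$; but from there everything is conditional ("should pin down", "I expect"). The paper's argument in the branch $(B[\sigma],B[1])=(0,\rho)$ is a concrete chain: the $F_3$-witness on $\{0,\sigma\}$ is turned into a unary $E(p)=F_3[E_1,\dots,E_n]$ by substituting $B(p)$ at the $0$-positions and $p$ at the $\sigma$-positions (this uses $B[\sigma]=0$, a dependence you never note), normalized to $E^*$ with $E^*[\sigma]=\rho$; if $E^*[1]=0$ the $F_7$-witness gives $H$ with $H[\sigma]=1$; if moreover $H[1]=\sigma$, the $F_{11}$-witness plus Theorem~\ref{th-1-1} yields the \emph{equality} $F_{11}[\gamma_1,\dots,\gamma_n]=F_{11}[\delta_1,\dots,\delta_n]$, whence a formula $J$ with $J[\sigma]=J[1]$, and each terminal sub-case lands exactly in the hypotheses (\ref{eq:2-6})--(\ref{eq:2-7}) of Lemma~\ref{lema2-1-3}, where $F_{12}$ finishes. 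None of these substitution patterns, nor the sub-case chain $E^*[1]\in\{\rho,0\}$, $H[1]\in\{1,\sigma\}$, nor the Theorem~\ref{th-1-1} step converting non-preservation of $I_{37}$ into a coincidence of values, appears in your plan; also your stated target (a formula whose values at $0$ and $\rho$ coincide in $\{0,\rho\}$) is not the condition Lemma~\ref{lema2-1-3} needs, which concerns the values at $\sigma$ and $1$.

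The genuine gap is the branch $(B[\sigma],B[1])=(\rho,0)$. There $B$ no longer supplies a formula with value $0$ at $\sigma$, so the $F_3$-construction you gesture at cannot even start; the paper escapes by invoking the $F_4$-witness on $\{0,1\}$ (where $B[1]=0$ can be used at the point $1$) to build $S$, then $S^*$ with $S^*[1]=\rho$, reducing to the first branch or to Lemma~\ref{lema2-1-3}. Note that $F_4$ is not among the formulas listed in the statement, so handling this branch with only $F_3,F_7,F_{11},F_{12}$ is at best non-obvious; your proposal neither notices the difficulty nor offers any argument for it, so the claimed conclusion is not established in that case.
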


\begin{proof}
The relation (\ref{eq:2-6}) and the fact that $B$ conserves the predicate $\Delta{x}=\Delta{y}$ on the algebra $\mbb$ implies that there are two possible situations: 1) $B[1]=\rho$, and 2) $B[1]=0$. 

Let us consider the first case. On the basis of the relation (\ref{eq:2-12}) we have that $B[\sigma]=0$. We consider the formula $F_3$. Since it does not conserve $R_3$ on \mbb there exists an ordered set \set{\alpha_1,\dots,\alpha_n} of elements of $\mbb$ such that  
\begin{eqnarray}
    {} & \alpha_i\in\set{0,\sigma},\quad i=1,\dots,n\label{eq:2-13}\\
    {} & F_{3}[\alpha_1,\dots,\alpha_n]
    \in\set{\rho,1}.\label{eq:2-14}
\end{eqnarray}
We design the formula $E(p)=F_3[\sta{E}{1}{n}]$, where for every $i=\stb{1}{n}$
\[
    E_i(p)=
    \left\{
        \begin{array}{ll}
            B(p), & \mbox{if } \alpha_i=0,\\
            p, & \mbox{if } \alpha_i=\sigma
        \end{array}
    \right.
\]
(in accordance to (\ref{eq:2-13}) other cases are impossible for the elements $\alpha_i$). The formula $E$ is direct expressible via formulas $F_3$ and $B$. Obviously, $E_i[\sigma]=\alpha_i$ and the view of relation  (\ref{eq:2-14}) we have 
$E[\sigma]=F_3[E_{1}[\sigma],\dots$, $E_{n}[\sigma]] = F_3[\alpha_{1},\dots,\alpha_{n}]
\in \set{\rho,1}$. 
Consider the formula $E^*(p)$, defined by the scheme
\[
    E^*(p)=
    \left\{
        \begin{array}{ll}
            E(p), & \mbox{if } E[\sigma]=\rho,\\
            B[E(p)], & \mbox{if } E[\sigma]=1.
        \end{array}
    \right.
\]
The formula $E^*(p)$ is directly expressible via formulas $B$ and  $E$ and satisfies the condition 
\begin{equation}\label{eq:2-15}
    E^*[\sigma]=\rho.
\end{equation}
Two sub-cases are possible: 1.1) $E^*[1]=\rho$ and 1.2) $E^*[1]=0$. In the sub-case  1.1) the formula $E^*(p)$ satisfies analogous conditions to  (\ref{eq:2-6}) for the formula $B(p)$ from lemma \ref{lema2-1-3} and then the proof will follow the corresponding proof of the lemma  \ref{lema2-1-3}, thus one of two constants $0$ or $\rho$ is obtained. 

Consider now the sub-case 1.2) when $E^*[1]=0$. Consider formula $F_7$. Since $F_7$ does not conserve the relation  $R_7$ on  \mbb, then there exists an ordered set of elements  \seq{\sta{\beta}{1}{n}} from  $\mbb$ such that 
\begin{eqnarray}
  &\beta_i\in\set{0,\rho,\sigma},\quad i=1,\dots,n\label{eq:2-16}\\
  &F_7[\sta{\beta}{1}{n}]=1\label{eq:2-17}
\end{eqnarray}
Take the formula $H(p)=F_7[\sta{H}{1}{n}]$, where for every $i=1,\dots,n$
\[
					H_i(p) = B(p),  \mbox{if } \beta_i=0,
\]\[
          H_i(p) =  E^*(p),  \mbox{if } \beta_i=\rho,
\]\[
          H_i(p) =   p,  \mbox{if } \beta_i=\sigma.
\]
(obviously, other cases are missed for the elements $\beta_i$). The formula $H$ is directly expressible via $F_7,B$ and $E^*$. It is clear $H_i[\sigma]=\beta_i$ and in agreement with relation  (\ref{eq:2-17}) we have  
\begin{equation}\label{eq:2-18}
    H[\sigma]=1.
\end{equation}

If $H[1]=1$ then the formula $B[H(p)]$ satisfies analogous conditions  to conditions (\ref{eq:2-6}) and  (\ref{eq:2-7}) for the formula $B$ from lemma \ref{lema2-1-3}. That is why we can obtain one of the constants $0$ or $\rho$ in the case when $H[1]=1$ in the same way as in lemma \ref{lema2-1-3}. 

Let $H[1]=\sigma$. Use the formula $F_{11}$. It follows from its properties that there exist two ordered sets of elements $(\sta{\gamma}{1}{n})$ and $(\sta{\delta}{1}{n})$ from \mbb, such that the next relation holds
\begin{equation}\label{eq:2-19}
    I_{37}[\gamma_i]=\delta_i, \quad i=1,\dots,n
\end{equation}
Taking also into consideration the theorem \ref{th-1-1} we have 
\begin{equation}\label{eq:2-20}
    F_{11}[\sta{\gamma}{1}{n}]=
    F_{11}[\sta{\delta}{1}{n}].
\end{equation}
Design the formula $J(p)=F_{11}[\stc{J}{1}{n}{(p)}]$, where for any $i=1,\dots,n$ we get 
\[
    J_i(p)=
    \left\{
        \begin{array}{ll}
            B(p), & \mbox{if } \gamma_i=0,\delta_i=\rho,\\
            E^*(p), & \mbox{if } \gamma_i=\rho,\delta_i=0,\\
            p, & \mbox{if } \gamma_i=\sigma,\delta_i=1,\\
            B(p), & \mbox{if } \gamma_i=1,\delta_i=\sigma
        \end{array}
    \right.
\]
(by properties of the relation (\ref{eq:2-19}) the elements $\gamma_i$ and $\delta_i$ do not take other values). $J(p)$ is directly expressible via $B$, $E^*$, $H$ and  $F_{11}$. Let us notice that $J_i[\sigma]=\gamma_i$, $J_i[1]=\delta_i$ \c{s}i, hence, by relation (\ref{eq:2-20}), we obtain $J[\sigma]=J[1]$. So, the formula $J^*(p)$, defined by the scheme 
\[
    J^*(p)=
    \left\{
        \begin{array}{ll}
            J(p), & \mbox{if } J[1]\in\set{0,\rho},\\
            B[J(p)], & \mbox{if } J[1]\in\set{\sigma,1},
        \end{array}
    \right.
\]
satisfies the relations 
\begin{equation}\label{eq:2-21}
    J[\sigma]=J[1],\; J[1]\in\set{0,\rho}
\end{equation}
Let us notice that conditions (\ref{eq:2-21}) are analogous to conditions  (\ref{eq:2-6}) and (\ref{eq:2-7}) from lemma \ref{lema2-1-3}. Hence, we can obtain in a similar way one of the constants  $0$ or $\rho$. So, the proof of the lemma (\ref{lema2-1-4}) in the case  1) is finished. 

Let us consider the second case, when  $B[1]=0$. Examine the formula $F_4$. Since it does not conserve the relation $R_4$ on \mbb, then there is an ordered set of elements $(\sta{\varepsilon}{1}{n})$ on \mbb{} such that 
\begin{eqnarray}
  &\varepsilon_i\in\set{0,1},\quad i=1,\dots,n\label{eq:2-22}\\
  &F_4[\sta{\varepsilon}{1}{n}]\in\set{\rho,\sigma}\label{eq:2-23}
\end{eqnarray}
Design the formula $S(p)=F_4[\sta{S}{1}{n}]$, where for any $i=1,\dots,n$ we have 
\[
    S_i(p)=
    \left\{
        \begin{array}{ll}
            B(p), & \mbox{if } \varepsilon_i=0,\\
            p, & \mbox{if } \varepsilon_i=1.
        \end{array}
    \right.
\]
(by properties of (\ref{eq:2-22}) we do not have other cases). The formula $S$ is directly expressible via $F_4$ \c{s}i $B$. Obviously $S_i[1]=\varepsilon_i$ and in agreement with relation (\ref{eq:2-23}) we have 
\[
    S[1]=F_4[\stc{S}{1}{n}{[1]}]=
    F_4[\sta{\varepsilon}{1}{n}]\in\set{\rho,\sigma}.
\]
Consider the formula $S^*(p)$ defined by the scheme 
\[
    S^*(p)=
    \left\{
        \begin{array}{ll}
            S(p), & \mbox{if } S[1]=\rho,\\
            B[S(p)], & \mbox{if } S[1]=\sigma.
        \end{array}
    \right.
\]
The formula $S^*(p)$  is directly expressible via  $B$ and  $S$ and verifies the condition $S^*[1]=\rho$. taking into consideration the theorem \ref{th-1-1} we also have the relation $S^*[\sigma]\in\set{0,\rho}$. If  $S^*[\sigma]=\rho$, then we obtain one of the constants $0$ or $\rho$ as in the case 1). It remains to consider the case when $S^*[\sigma]=0$. But in this case we are already under conditions of the first case, which was successfully considered already. 
\end{proof}

\begin{ilem}\label{lema2-1-5}
All constants $0,\rho,\sigma,1$ are expressible in the logic $\lmbb$ via formulas $F_i,\; i=3,\dots,10$, via any unary formulas $A$ and $B$, which verify the corresponding conditions (\ref{eq:2-2}) and (\ref{eq:2-6}), and via any constant $0$ or $\rho$. 
\end{ilem}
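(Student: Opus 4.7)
The plan is to treat the two starting cases (given constant $0$, given constant $\rho$) by a finite case analysis that repeatedly exploits the hypothesis that each $F_i$ fails to preserve $R_i$. The engine is the following observation: if every element appearing in the defining matrix $\mathfrak{M}_i$ is already available as an expressible constant, then substituting those constants for the variables of $F_i$ by weak substitution produces a new expressible constant lying \emph{outside} $\mathfrak{M}_i$. Combined with the fact (Theorem~\ref{th-1-1}) that $A$, $B$ and every $F_i$ preserve the predicate $\Delta x = \Delta y$ on $\mbb$, this allows us to enlarge our stock of constants in stages.

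The first stage produces a pair of available constants with distinct $\Delta$-values. If the given constant is $0$, condition (\ref{eq:2-2}) gives $A[0]\in\{\sigma,1\}$, producing $\{0,\sigma\}$ or $\{0,1\}$. If the given constant is $\rho$, then since $\Delta\rho=\Delta 0$ the preservation of $\Delta x=\Delta y$ by $A$ forces $A[\rho]\in\{\sigma,1\}$, producing $\{\rho,\sigma\}$ or $\{\rho,1\}$. In each of these four branches the obtained pair is exactly the defining matrix of one of $F_3,F_4,F_5,F_6$: namely $\{0,\sigma\}=\mathfrak{M}_3$, $\{0,1\}=\mathfrak{M}_4$, $\{\rho,\sigma\}=\mathfrak{M}_5$, $\{\rho,1\}=\mathfrak{M}_6$. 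Substituting the pair into the matching $F_i$ yields a third constant, which must lie in the complementary pair of $\mbb$.

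At that point we have three distinct constants in $\mbb$, which necessarily form one of $\{0,\rho,\sigma\}$, $\{0,\rho,1\}$, $\{0,\sigma,1\}$, $\{\rho,\sigma,1\}$. Each of these triples coincides with $\mathfrak{M}_7$, $\mathfrak{M}_8$, $\mathfrak{M}_9$, $\mathfrak{M}_{10}$ respectively, so substituting our three constants into the matching $F_j$ with $j\in\{7,8,9,10\}$ yields the fourth constant, completing the expressibility of $0,\rho,\sigma,1$. The formula $B$ and condition (\ref{eq:2-6}) are not strictly necessary for the enlargement above, but can short-circuit the $\{0,1\}$ branch, where $B[1]\in\{0,\rho\}$ may directly supply $\rho$ without invoking $F_4$.

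The main obstacle is purely organisational: tracking the roughly eight branches coming from the binary choices $A[0]\in\{\sigma,1\}$, $A[\rho]\in\{\sigma,1\}$ and the two-valued outputs of the intermediate $F_i$-substitutions, and verifying in each branch that the triple of constants so obtained indeed coincides with the appropriate $\mathfrak{M}_j$. No deeper structural argument is required beyond Theorem~\ref{th-1-1} and the non-preservation hypotheses on $F_3,\dots,F_{10}$.
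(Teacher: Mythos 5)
Your proposal is correct and follows essentially the same route as the paper: use $A$ and the given constant ($0$ or $\rho$), together with preservation of $\Delta x=\Delta y$, to obtain one of the pairs $\set{0,\sigma},\set{0,1},\set{\rho,\sigma},\set{\rho,1}$; then use the non-preservation of $R_3,\dots,R_6$ by $F_3,\dots,F_6$ to pass to one of the triples, and finally $F_7,\dots,F_{10}$ to obtain the missing fourth constant. Your side remark that $B$ and condition (\ref{eq:2-6}) are not actually needed here also matches the paper's proof, which uses only $A$ at the first stage.
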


\begin{proof}
Let us convince ourselves that one of the following systems of formulas (\ref{eq:2-24}) is expressible via one of the constants $0$ or $\rho$ and the formula $A$: 
\begin{equation}\label{eq:2-24}
    \set{0,\sigma},\;
    \set{0,1},\;
    \set{\rho,\sigma},\;
    \set{\rho,1}.
\end{equation}

Let us consider we have the  constant $0$. By properties of  (\ref{eq:2-2}) we have $A[0]\in\set{\sigma,1}$, which means we have at least one of the first two systems of (\ref{eq:2-24}). Suppose we have the constant  $\rho$. Then by theorem  \ref{th-1-1} we have $A[\rho]\in\set{\sigma,1}$, and by the similar reasons as in the case of the constant $0$ we can conclude analogously we have at least one the the last two systems of  (\ref{eq:2-24}). 

We wil show in the following that via every system of formulas of the list (\ref{eq:2-24}) and via corresponding formulas  $F_3$, $F_4$, $F_5$, $F_6$ is expressible one of the following systems of constants
\begin{equation}\label{eq:2-25}
    \set{0,\rho,\sigma},\;
    \set{0,\rho,1},\;
    \set{0,\sigma,1},\;
    \set{\rho,\sigma,1}.
\end{equation}

Let us consider the system of formulas $\set{0,\sigma}$. Examine the formula $F_3$. Obviously via $F_3$ and constants $0$ and $\sigma$ is expressible some formula  $F^*_3(p,q)$, which satisfies the condition $F^*_3[0,\sigma]\in\set{\rho,1}$. Hence, we obtain one of the systems of formulas \set{0,\rho,\sigma} or \set{0,\sigma,1}. In a similar way we obtain: 
\begin{itemize}
\item the system \set{0,\rho,1} or the system \set{0,\sigma,1} via \set{0,1} and $F_4$; 
\item the system \set{0,\rho,\sigma} or the system \set{\rho,\sigma,1} via \set{\rho,\sigma} and $F_5$; 
\item the system \set{0,\rho,1} or the system \set{\rho,\sigma,1} via \set{\rho,1} and $F_6$. 
\end{itemize}

In a similar manner we obtain that all constants of the system \set{0,\rho,\sigma,1} are  expressible in  \lmbb{} via every system of formulas of (\ref{eq:2-25}) and corresponding formulas $F_7,F_8,F_9,F_{10}$. 
\end{proof}

\section{Conclusions}

Theorem \ref{lema2-1} provide us only sufficient conditions for expressibility of constants of the propositional provability logic $L\mathfrak{B}_2$. We can consider a slice of extensions of $GL$ \cite{Blok1980}, which also has an additional axiom $\Delta\Delta p$ and examine the conditions of expressibility of constants in these logics too. Note the logic $L\mathfrak{B}_2$ is an element of this slice of extensions. Also we can examine other types of expressibility of formulas: implicit expressibility, parametric expressibility, existential expressibility, etc.  

\renewcommand{\baselinestretch}{1.0}\normalsize%

\end{document}